\documentclass[11pt]{article}

\usepackage[T1]{fontenc}
\usepackage[utf8]{inputenc}
\usepackage{lmodern}
\usepackage[english]{babel}
\usepackage[a4paper,margin=2.7cm]{geometry}
\usepackage{amsmath,amssymb,amsthm,mathtools}
\usepackage{booktabs,array}
\usepackage{float}
\usepackage{needspace}
\usepackage{microtype}
\usepackage{xcolor}
\usepackage{tikz}
\usepackage[round,authoryear]{natbib}
\usepackage{authblk}
\usepackage[colorlinks=true,linkcolor=blue!55!black,
  citecolor=blue!55!black,urlcolor=blue!55!black]{hyperref}
\hypersetup{
  pdftitle={Tail representations for absolute, signed-power, inverse, and logarithmic moments},
  pdfauthor={Roberto Vila and Eduardo Nakano},
  pdfsubject={Tail representations for moment functionals},
  pdfkeywords={tail integral, layer-cake representation, absolute moment,
    signed power, integer-valued random variable, inverse moment,
    logarithmic moment}
}

\newtheorem{theorem}{Theorem}[section]
\newtheorem{proposition}[theorem]{Proposition}
\newtheorem{corollary}[theorem]{Corollary}
\theoremstyle{definition}
\newtheorem{definition}[theorem]{Definition}
\newtheorem{example}[theorem]{Example}
\theoremstyle{remark}
\newtheorem{remark}[theorem]{Remark}

\newcommand{\E}{\mathbb{E}}
\newcommand{\Pp}{\mathbb{P}}
\newcommand{\R}{\mathbb{R}}

\newcommand{\ind}{\mathbf{1}}
\newcommand{\sgn}{\operatorname{sgn}}
\newcommand{\dd}{\,\mathrm{d}}
\newcommand{\abs}[1]{\left\lvert #1\right\rvert}
\newcommand{\spow}[2]{#1^{\langle #2\rangle}}
\newcommand{\doi}[1]{\href{https://doi.org/#1}{\nolinkurl{doi:#1}}}

\title{Tail representations for absolute, signed-power,\\
  inverse, and logarithmic moments}
\author[1,*]{Roberto Vila}
\author[1]{Eduardo Nakano}
\author[2]{Cecilia Castro}
\affil[1]{Department of Statistics, University of Bras\'ilia, Bras\'ilia, Brazil}
\affil[2]{Centre of Mathematics, Universidade do Minho, Braga, Portugal}
\date{August 2026}

\begin{document}
\maketitle
\begingroup
\renewcommand{\thefootnote}{\fnsymbol{footnote}}
\footnotetext[1]{Corresponding author:
  \href{mailto:rovig161@gmail.com}{rovig161@gmail.com}.}
\endgroup

\begin{abstract}
Let $X$ be a real-valued random variable and let $p>0$.  When $X$ can take
negative values, $X^p$ is not generally real-valued for non-integer $p$;
formulas for moments of arbitrary positive order must therefore distinguish
the absolute power
$\abs{X}^p$ from the signed power $\spow{X}{p}=\sgn(X)\abs{X}^p$.
Using the layer-cake identity, we give a self-contained treatment of tail
representations for the positive and negative parts of $X$, including
absolute and signed-power moments, centered forms, and exact formulas on
ordered discrete supports and integer lattices.  The basic identities require
neither a density nor continuity and apply to continuous, discrete, singular,
and mixed distributions.  For strictly positive random variables, we also
give representations and finiteness criteria for inverse moments, together
with distribution-function and Frullani--Laplace formulas for logarithmic
moments.  Examples illustrate two-sided support, atoms, lattice tails, and
behavior near zero.  The presentation places these classical identities in a
common notation and states the domain and integrability conditions required
in each case.
\end{abstract}

\noindent\textbf{Keywords:}
tail integral; layer-cake representation; absolute moment; signed power;
integer-valued random variable; inverse moment; logarithmic moment.

\section{Introduction}

For a nonnegative random variable $Y$ and $p>0$, the identity
\begin{equation}\label{eq:classical-tail}
  \E[Y^p]
  =p\int_0^\infty t^{p-1}\Pp(Y>t)\dd t
\end{equation}
is one of the most useful links between a moment and a distributional tail.
It is often called the tail-integral, integrated-tail, or layer-cake
representation.  It requires neither a density nor continuity.  It is also
valid as an identity in $[0,\infty]$, so it can be used to decide whether a
moment exists.

The tail-integral formula and its extensions have a substantial literature.
Classical probability texts record versions of \eqref{eq:classical-tail};
see, for example, \citet{Feller1971} and \citet{Shorack2000}.
\citet{Hong2012} discusses higher-order versions for nonnegative variables,
and \citet{Hong2015} treats product moments.  Continuous and integer-valued
formulas for nonnegative variables and orders $r\geq1$ are given by
\citet{Chakraborti2019},
while \citet[Proposition~2.1]{Lo2019} gives the two-sided formula for the
mean of an integrable real-valued variable and develops its geometric and
pedagogical interpretation.  Related tail-integral formulas for real-valued
random variables are studied by \citet{NadarajahOkorie2022}.  Broader
identities for transformed expectations and random vectors are developed by
\citet{Liu2020} and \citet{Ogasawara2020}, respectively.

Against this background, the present note has an expository objective: it
organizes several tail representations in a common notation, with explicit
attention to their domains, integrability conditions, and behavior at atoms.
Four distinctions are central to this organization.

First, a general real power of a negative number is not real.  The phrase
``real-order moment of a real-valued random variable'' is therefore
ambiguous unless the power convention is stated.  Second, a signed-power
moment cannot be obtained by subtracting two infinite one-sided moments.
Third, atoms require care with strict inequalities and left limits of the
cumulative distribution function.  Fourth, inverse moments are controlled
by behavior near zero, whereas positive-order moments are controlled by
the tails at infinity.

Accordingly, we work with positive and negative parts, absolute powers, and
signed powers.  The one-sided identities hold in $[0,\infty]$ for every
$p>0$ and combine into the extended-valued absolute-moment formula.  When
$\E[\abs{X}^p]<\infty$, they also give the signed-power moment.  Ordinary integer
moments are recovered as special cases.  Section~\ref{sec:general} develops
these representations and
their centered forms.  Section~\ref{sec:discrete} gives exact sums on ordered
discrete supports and integer lattices.  Section~\ref{sec:inverse} treats
inverse moments of strictly positive random variables.
Section~\ref{sec:logarithmic}
gives distribution-function and Laplace-transform representations of
$\E[\log X]$.  Section~\ref{sec:summary} summarizes the relevant domains and
finiteness conditions.

\section{Tail representations for absolute and signed-power moments}
\label{sec:general}

Let $(\Omega,\mathcal F,\Pp)$ be a probability space and let
$X:\Omega\to\R$ be a random variable.  Write
\[
  X_+=\max(X,0),
  \qquad
  X_-=\max(-X,0),
\]
so that $X=X_+-X_-$ and $\abs{X}=X_++X_-$.  Let
\[
  F_X(x)=\Pp(X\leq x),
  \qquad
  F_X(x^-)=\Pp(X<x),
  \qquad
  \overline F_X(x)=1-F_X(x)=\Pp(X>x).
\]

\begin{definition}\label{def:signed-power}
For $p>0$, the \emph{signed power} of $x\in\R$ is
\[
  \spow{x}{p}=\sgn(x)\abs{x}^p,
  \qquad \sgn(0)=0.
\]
The corresponding one-sided moments are
\[
  M_p^+(X)=\E[X_+^p],
  \qquad
  M_p^-(X)=\E[X_-^p],
\]
with values in $[0,\infty]$.  Define the absolute moment, possibly infinite,
by
\[
  A_p(X)=\E[\abs{X}^p]=M_p^+(X)+M_p^-(X).
\]
When $A_p(X)<\infty$, define the signed-power moment by
\[
  S_p(X)=\E[\spow{X}{p}]=M_p^+(X)-M_p^-(X).
\]
\end{definition}

The basic argument is the following layer-cake identity.

\begin{proposition}[Layer-cake identity]\label{prop:layer-cake}
If $Y\geq0$ almost surely and $p>0$, then
\begin{equation}\label{eq:layer-cake}
  \E[Y^p]
  =p\int_0^\infty t^{p-1}\Pp(Y>t)\dd t,
\end{equation}
where both sides are allowed to be infinite.
\end{proposition}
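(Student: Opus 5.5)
The plan is to write the power $Y^p$ as an integral of an indicator and then exchange expectation and integration by Tonelli's theorem; everything is nonnegative, so the identity holds in $[0,\infty]$ without any integrability assumption.

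\textbf{Step 1 (pointwise identity).} For a fixed $y\geq0$ and $p>0$, the fundamental theorem of calculus gives
\[
  y^p=\int_0^y p\,t^{p-1}\dd t=\int_0^\infty p\,t^{p-1}\ind\{t<y\}\dd t .
\]
The integrand $p\,t^{p-1}$ is locally integrable at $0$ because $p>0$, even when $p<1$. The value of the indicator at the single point $t=y$ does not affect the integral. For $y=0$ both sides vanish.

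\textbf{Step 2 (joint measurability).} On the null set where $Y<0$, or where $Y$ is undefined, set $Y=0$; this changes neither side of \eqref{eq:layer-cake}. Consider
\[
  H(\omega,t)=p\,t^{p-1}\ind\{Y(\omega)>t\}
\]
on $\Omega\times(0,\infty)$ with the product $\sigma$-algebra $\mathcal F\otimes\mathcal B((0,\infty))$. It is nonnegative and measurable, because
\[
  \{(\omega,t):Y(\omega)-t>0\}
\]
is the preimage of an open set under the measurable map $(\omega,t)\mapsto Y(\omega)-t$, and $t\mapsto p\,t^{p-1}$ is Borel.

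\textbf{Step 3 (Tonelli).} Applying Step 1 with $y=Y(\omega)$ and then Tonelli's theorem for $\Pp\otimes$Lebesgue gives
\[
  \E[Y^p]=\E\Big[\int_0^\infty H(\cdot,t)\dd t\Big]=\int_0^\infty \E[H(\cdot,t)]\dd t=p\int_0^\infty t^{p-1}\Pp(Y>t)\dd t .
\]
Tonelli applies to nonnegative measurable functions with values in $[0,\infty]$ and needs only $\sigma$-finiteness, which both measures have. So the equality holds even when both sides are infinite, and in particular one side is finite exactly when the other is.

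The only point needing care is Step 2, since Tonelli requires joint measurability rather than measurability in each variable separately. Once that is settled, no density or continuity of the law of $Y$ is used. The strict inequality $Y>t$ could equally be replaced by $Y\geq t$, because the two tail functions differ at no more than countably many $t$, namely the atoms of $Y$, which form a Lebesgue-null set.
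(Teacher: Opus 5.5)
Your proof is correct and follows the paper's argument: the pointwise identity $y^p=\int_0^\infty p t^{p-1}\ind_{\{t<y\}}\dd t$, followed by Tonelli's theorem so that no finiteness assumption is needed. Your explicit check that $(\omega,t)\mapsto p t^{p-1}\ind_{\{Y(\omega)>t\}}$ is jointly measurable is a detail the paper leaves implicit, and it does not change the route.
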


\begin{proof}
For every $y\geq0$,
\[
  y^p=\int_0^\infty p t^{p-1}\ind_{\{t<y\}}\dd t.
\]
The integrand is nonnegative.  Tonelli's theorem therefore permits the
interchange of expectation and integration without first assuming that the
moment is finite:
\[
  \E[Y^p]
  =\int_0^\infty p t^{p-1}\E[\ind_{\{Y>t\}}]\dd t
  =p\int_0^\infty t^{p-1}\Pp(Y>t)\dd t.
\]
\end{proof}

\begin{theorem}[Two-sided tail representations]\label{thm:two-sided}
For every real-valued $X$ and every $p>0$,
\begin{align}
  M_p^+(X)
  &=p\int_0^\infty t^{p-1}\{1-F_X(t)\}\dd t,
  \label{eq:positive-part}\\
  M_p^-(X)
  &=p\int_0^\infty t^{p-1}F_X((-t)^-)\dd t.
  \label{eq:negative-part}
\end{align}
Consequently,
\begin{align}
  A_p(X)
  &=p\int_0^\infty t^{p-1}
    \{1-F_X(t)+F_X((-t)^-)\}\dd t,
  \label{eq:absolute-moment}
\end{align}
as an identity in $[0,\infty]$.  If $A_p(X)<\infty$, then
\begin{align}
  S_p(X)
  &=p\int_0^\infty t^{p-1}
    \{1-F_X(t)-F_X((-t)^-)\}\dd t.
  \label{eq:signed-moment}
\end{align}
\end{theorem}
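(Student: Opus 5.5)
The plan is to apply Proposition~\ref{prop:layer-cake} separately to the two nonnegative random variables $Y=X_+$ and $Y=X_-$. Then I will translate the resulting tail probabilities into the distribution function of $X$, being careful with strict versus non-strict inequalities so that atoms are handled correctly.

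\emph{Positive part.} For $t\geq0$ (indeed for $t>0$, which suffices since $\{0\}$ is Lebesgue-null), $X_+>t$ holds if and only if $X>t$. Hence $\Pp(X_+>t)=\Pp(X>t)=1-F_X(t)$. Substituting this into \eqref{eq:layer-cake} with $Y=X_+$ gives \eqref{eq:positive-part} as an identity in $[0,\infty]$.

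\emph{Negative part.} For $t\geq0$, $X_->t$ holds if and only if $-X>t$, that is, $X<-t$. Therefore $\Pp(X_->t)=\Pp(X<-t)=F_X((-t)^-)$, using the left-limit notation fixed earlier. This is the step requiring attention: writing $F_X(-t)$ instead would be wrong at atoms. In any case, $F_X((-t)^-)$ and $F_X(-t)$ differ for at most countably many $t$, so either version integrates to the same value. With this identification, \eqref{eq:layer-cake} applied to $Y=X_-$ gives \eqref{eq:negative-part}.

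\emph{Combining.} For \eqref{eq:absolute-moment}, I add the two identities. Both integrands are nonnegative, so linearity of the Lebesgue integral holds in $[0,\infty]$ without any finiteness assumption, and $A_p=M_p^++M_p^-$ by definition. For \eqref{eq:signed-moment}, assume $A_p(X)<\infty$. Then both $M_p^\pm(X)$ are finite, and the functions $t\mapsto t^{p-1}\{1-F_X(t)\}$ and $t\mapsto t^{p-1}F_X((-t)^-)$ are both Lebesgue integrable on $(0,\infty)$. Their difference is therefore integrable, and the integral of the difference equals the difference of the integrals, which is $M_p^+-M_p^-=S_p(X)$.

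The only genuine obstacle is this last step. Subtracting is legitimate only once integrability of each piece is known, because otherwise one could face $\infty-\infty$ or an improper integral that converges only conditionally. The hypothesis $A_p(X)<\infty$ is used exactly there.
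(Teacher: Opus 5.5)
Your proposal is correct and matches the paper's proof: apply Proposition~\ref{prop:layer-cake} to $X_+$ and $X_-$, identify $\Pp(X_+>t)=1-F_X(t)$ and $\Pp(X_->t)=\Pp(X<-t)=F_X((-t)^-)$, then add the two identities, and subtract them when $A_p(X)<\infty$. Your explicit justification of the subtraction step is a useful detail that the paper leaves implicit: each nonnegative integrand has finite integral, hence is integrable, so their difference is integrable.
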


\begin{proof}
Apply Proposition~\ref{prop:layer-cake} first to $X_+$ and then to $X_-$.
For $t\geq0$,
\[
  \Pp(X_+>t)=\Pp(X>t)=1-F_X(t)
\]
and
\[
  \Pp(X_->t)=\Pp(X<-t)=F_X((-t)^-).
\]
Equations \eqref{eq:absolute-moment} and \eqref{eq:signed-moment} follow
from $\abs{X}^p=X_+^p+X_-^p$ and
$\spow{X}{p}=X_+^p-X_-^p$.
\end{proof}

Related two-sided formulas appear in \citet{NadarajahOkorie2022}.
Corresponding formulas for transforms satisfying the regularity assumptions
of \citet{Liu2020} follow from that paper's general transformed-expectation
framework.  The formulation above separates absolute and signed-power
conventions and uses Tonelli's theorem so that the nonnegative identities
remain valid when a moment is infinite.

If $F_X$ is continuous---in particular, if $X$ is absolutely continuous---
then $F_X((-t)^-)=F_X(-t)$ in these formulas.  No density is otherwise required.

\begin{corollary}[Moment existence]\label{cor:existence}
For $p>0$, $\E[\abs{X}^p]<\infty$ if and only if
\[
  \int_0^\infty t^{p-1}\{1-F_X(t)\}\dd t<\infty
  \quad\text{and}\quad
  \int_0^\infty t^{p-1}F_X((-t)^-)\dd t<\infty.
\]
\end{corollary}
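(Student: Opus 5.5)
The corollary follows directly from Theorem~\ref{thm:two-sided}, read as identities in $[0,\infty]$. The plan is to start from the decomposition $\abs{X}^p=X_+^p+X_-^p$, which holds pointwise because at most one of $X_+$ and $X_-$ is nonzero. Taking expectations of nonnegative quantities gives $A_p(X)=M_p^+(X)+M_p^-(X)$ in $[0,\infty]$, as recorded in Definition~\ref{def:signed-power}.

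Next I will substitute \eqref{eq:positive-part} and \eqref{eq:negative-part}, which hold for every $p>0$ with both sides allowed to be infinite. This yields
\[
  A_p(X)=p\int_0^\infty t^{p-1}\{1-F_X(t)\}\dd t
  +p\int_0^\infty t^{p-1}F_X((-t)^-)\dd t .
\]
Here each integral is a well-defined element of $[0,\infty]$, since its integrand is nonnegative and measurable. The integrand $t\mapsto t^{p-1}\{1-F_X(t)\}$ is measurable because $F_X$ is monotone. The same holds for $t\mapsto F_X((-t)^-)$, which is also monotone.

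The conclusion then uses the elementary fact that a sum of two elements of $[0,\infty]$ is finite if and only if both summands are finite. Since $p$ is a positive finite constant, multiplying by it does not affect finiteness. Hence $A_p(X)<\infty$ exactly when both integrals in the statement are finite.

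I do not expect any real obstacle. The only point needing care is that one must not subtract or rearrange possibly infinite quantities. Since only sums of nonnegative extended reals appear, this causes no trouble.
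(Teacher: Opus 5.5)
Your proof is correct and is essentially the paper's argument: the corollary is stated without separate proof as an immediate consequence of Theorem~\ref{thm:two-sided}, i.e.\ the $[0,\infty]$-valued identity \eqref{eq:absolute-moment}, which comes from $\abs{X}^p=X_+^p+X_-^p$ together with \eqref{eq:positive-part}--\eqref{eq:negative-part}, combined with the fact that a sum of two nonnegative extended reals is finite exactly when both summands are. Your measurability remarks and your caution about not subtracting possibly infinite quantities are accurate, though the theorem already supplies everything needed.
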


\begin{remark}[Relation to ordinary powers]\label{rem:ordinary-powers}
If $p$ is a positive integer, then $x^p=\abs{x}^p$ for even $p$ and
$x^p=\spow{x}{p}$ for odd $p$.  More generally, suppose
$p=m/n>0$ is in lowest terms and $n$ is odd, and define the power of a
negative number using the real $n$th root.  Then the resulting $x^p$ equals
$\abs{x}^p$ on the negative half-line when $m$ is even and equals
$\spow{x}{p}$ when $m$ is odd.  For a general non-integer real $p$, $x^p$
need not be real when $x<0$.  Absolute and signed powers avoid this domain
problem and vary coherently with every real $p>0$.
\end{remark}

\Needspace{10\baselineskip}
\subsection{Centered representations}

Applying the preceding formulas to $X-a$ yields centered moments and
absolute deviations about any fixed $a\in\R$.

\begin{corollary}[Tail representations around a center]\label{cor:centered}
Let $a\in\R$ and $p>0$.  Then
\begin{equation}
  \E[\abs{X-a}^p]
  =p\int_0^\infty t^{p-1}
    \{1-F_X(a+t)+F_X((a-t)^-)\}\dd t
  \label{eq:centered-absolute}
\end{equation}
as an identity in $[0,\infty]$.  If $\E[\abs{X-a}^p]<\infty$, then
\begin{equation}
  \E[\spow{(X-a)}{p}]
  =p\int_0^\infty t^{p-1}
    \{1-F_X(a+t)-F_X((a-t)^-)\}\dd t.
  \label{eq:centered-signed}
\end{equation}
\end{corollary}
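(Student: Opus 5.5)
The plan is to apply Theorem~\ref{thm:two-sided} to the shifted variable $Y=X-a$ and then rewrite its distribution function in terms of $F_X$. Since $Y$ is again a real-valued random variable on the same probability space, Theorem~\ref{thm:two-sided} holds for it with no further hypotheses. In particular, \eqref{eq:absolute-moment} gives $\E[\abs{X-a}^p]=A_p(Y)$ as an identity in $[0,\infty]$. Likewise, \eqref{eq:signed-moment} gives $\E[\spow{(X-a)}{p}]=S_p(Y)$ whenever $A_p(Y)=\E[\abs{X-a}^p]<\infty$, which is exactly the hypothesis of the second part.

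The substantive step is translating the two tail functions of $Y$. For $t\ge 0$ we have $1-F_Y(t)=\Pp(X-a>t)=\Pp(X>a+t)=1-F_X(a+t)$. For the negative side we need the left limit: $F_Y((-t)^-)=\Pp(Y<-t)=\Pp(X<a-t)=F_X((a-t)^-)$. To avoid any subtlety at atoms, I will derive this directly from the event identity $\{Y<-t\}=\{X<a-t\}$, rather than trying to shift a left limit of $F_Y$. This matters precisely when $a-t$ is an atom of $X$, which is the case the strict inequality is designed to handle.

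Substituting these two expressions into \eqref{eq:absolute-moment} and \eqref{eq:signed-moment} for $Y$ gives \eqref{eq:centered-absolute} and \eqref{eq:centered-signed}. For the signed case, the finiteness of $\E[\abs{X-a}^p]$ together with the split in Corollary~\ref{cor:existence} makes both one-sided integrals finite. Their difference is therefore well defined, and subtracting them is legitimate.

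I expect no real obstacle. The only point needing care is the correct handling of strict versus non-strict inequalities, which fixes where the left limit $F_X((a-t)^-)$ appears.
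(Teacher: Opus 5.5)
Your proposal is correct and matches the paper's proof. Like the paper, you apply Theorem~\ref{thm:two-sided} to $X-a$ and translate the tails through the event identities $\{X-a>t\}=\{X>a+t\}$ and $\{X-a<-t\}=\{X<a-t\}$, reading $\Pp(X<a-t)$ directly as $F_X((a-t)^-)$ so that atoms are handled correctly.
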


\begin{proof}
Apply Theorem~\ref{thm:two-sided} to $X-a$ and use
\[
  \Pp(X-a>t)=\Pp(X>a+t),
  \qquad
  \Pp(X-a<-t)=\Pp(X<a-t).
\]
\end{proof}

For $p=1$, the signed-power formula is the integrated-tail representation of
\citet[Proposition~2.1]{Lo2019}, written in the present notation; the left
limit records explicitly the strict lower-tail event $\Pp(X<-t)$.
If $\E[\abs X]<\infty$, equations
\eqref{eq:absolute-moment} and \eqref{eq:signed-moment} become
\begin{align}
  \E[\abs X]
  &=\int_0^\infty\{1-F_X(t)+F_X((-t)^-)\}\dd t,
  \label{eq:mean-absolute}\\
  \E[X]
  &=\int_0^\infty\{1-F_X(t)-F_X((-t)^-)\}\dd t.
  \label{eq:mean-signed}
\end{align}
Thus, the mean is the positive-tail area minus the negative-tail area,
whereas the absolute first moment is their sum.

\begin{example}[A continuous two-sided distribution]\label{ex:uniform}
Let $X\sim\operatorname{Unif}(-1,3)$.  Then
\[
  F_X(x)=
  \begin{cases}
    0, & x<-1,\\
    (x+1)/4, & -1\leq x<3,\\
    1, & x\geq3.
  \end{cases}
\]
The two areas in Figure~\ref{fig:uniform} are
\[
  M_1^-(X)=\int_0^1\frac{1-t}{4}\dd t=\frac18,
  \qquad
  M_1^+(X)=\int_0^3\frac{3-t}{4}\dd t=\frac98.
\]
Hence $\E[X]=9/8-1/8=1$ and $\E[\abs X]=9/8+1/8=5/4$.
\end{example}

\begin{figure}[H]
\centering
\begin{tikzpicture}[x=1.45cm,y=4.6cm]
  \fill[blue!18] (-1,0)--(0,0.25)--(0,0)--cycle;
  \fill[red!18] (0,0.25)--(3,1)--(0,1)--cycle;
  \draw[->] (-2.05,0)--(4.2,0) node[right] {$x$};
  \draw[->] (0,-0.05)--(0,1.14) node[above] {$F_X(x)$};
  \draw[thick] (-2,0)--(-1,0)--(3,1)--(4,1);
  \draw[dashed,gray] (3,0)--(3,1);
  \draw[dashed,gray] (0,1)--(4,1);
  \foreach \x/\lab in {-1/$-1$,3/$3$}
    \draw (\x,0.015)--(\x,-0.015) node[below] {\lab};
  \draw (-0.04,1)--(0.04,1) node[left=4pt] {$1$};
  \node[blue!60!black] at (-0.6,0.075) {$M_1^-=1/8$};
  \node[red!65!black] at (1.35,0.82) {$M_1^+=9/8$};
\end{tikzpicture}
\caption{For $X\sim\operatorname{Unif}(-1,3)$, the blue area is
$\int_{-1}^{0}F_X(t)\dd t=M_1^-(X)$ and the red area is
$\int_0^3\{1-F_X(t)\}\dd t=M_1^+(X)$.}
\label{fig:uniform}
\end{figure}

\section{Discrete tail sums and summation by parts}
\label{sec:discrete}

For a discrete distribution, a tail integral reduces to a weighted sum of
tail probabilities.  On a lattice, the resulting identity is equivalently a
form of summation by parts.  We first record the ordered-support form used
below.

\begin{proposition}[Ordered nonnegative support]\label{prop:ordered-support}
Let $Y\geq0$ have a possible atom at zero and either finitely many positive
support points $0<a_1<\cdots<a_m$, or countably many positive support points
$0<a_1<a_2<\cdots$ with $a_j\uparrow\infty$.  For $p>0$, in the countably
infinite case,
\begin{equation}\label{eq:general-discrete}
  \E[Y^p]
  =a_1^p\Pp(Y>0)
   +\sum_{j=1}^\infty
    (a_{j+1}^p-a_j^p)\Pp(Y>a_j),
\end{equation}
as an identity in $[0,\infty]$.  In the finite case,
\begin{equation}\label{eq:finite-discrete}
  \E[Y^p]
  =a_1^p\Pp(Y>0)
   +\sum_{j=1}^{m-1}
    (a_{j+1}^p-a_j^p)\Pp(Y>a_j),
\end{equation}
where the sum is empty when $m=1$.
\end{proposition}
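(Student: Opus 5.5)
We will derive both formulas from the layer-cake identity, Proposition~\ref{prop:layer-cake}, by noting that the tail function $t\mapsto\Pp(Y>t)$ is constant between consecutive support points. Set $a_0=0$. Since $Y$ takes values only in $\{0,a_1,a_2,\dots\}$, for $t\in[a_{j},a_{j+1})$ with $j\ge0$ the event $\{Y>t\}$ coincides with $\{Y\ge a_{j+1}\}=\{Y>a_j\}$. Hence $\Pp(Y>t)=\Pp(Y>a_j)$ on that interval, and in particular $\Pp(Y>t)=\Pp(Y>0)$ for $t\in[0,a_1)$.

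In the countably infinite case the intervals $[a_j,a_{j+1})$, $j\ge0$, partition $[0,\infty)$ because $a_j\uparrow\infty$. The integrand $pt^{p-1}\Pp(Y>t)$ is nonnegative, so countable additivity of the Lebesgue integral (equivalently, monotone convergence) lets us split \eqref{eq:layer-cake} into a sum over these intervals in $[0,\infty]$:
\[
  \E[Y^p]=\sum_{j=0}^\infty \Pp(Y>a_j)\int_{a_j}^{a_{j+1}}pt^{p-1}\dd t
  =\sum_{j=0}^\infty (a_{j+1}^p-a_j^p)\Pp(Y>a_j).
\]
The $j=0$ term is $a_1^p\Pp(Y>0)$, which gives \eqref{eq:general-discrete}. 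No finiteness assumption is needed, since every term is nonnegative and both sides may equal $+\infty$.

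In the finite case we use the same partition for $j=0,\dots,m-1$. On $[a_m,\infty)$ we have $\Pp(Y>t)=0$, because $a_m$ is the largest support point, so that piece contributes nothing. The sum then stops at $j=m-1$. When $m=1$, only the $j=0$ term $a_1^p\Pp(Y>0)$ survives, which matches the empty-sum convention in \eqref{eq:finite-discrete}.

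The only delicate step is the identification of $\{Y>t\}$ with $\{Y>a_j\}$ on the half-open interval $[a_j,a_{j+1})$, which is where the strict inequality and the atoms matter. At $t=a_j$ itself, $\{Y>a_j\}$ correctly excludes the atom at $a_j$. For $t$ slightly less than $a_{j+1}$, the atom at $a_{j+1}$ is correctly included. In any case, the endpoints form a countable set of Lebesgue measure zero and do not affect the integral.
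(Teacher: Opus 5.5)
Your proof is correct and takes the same route as the paper. You split the layer-cake integral at the support points, use that $\Pp(Y>t)=\Pp(Y>a_j)$ is constant between consecutive support points, and integrate $pt^{p-1}$ over each piece; your appeal to countable additivity for the nonnegative integrand and your observation that the tail vanishes on $[a_m,\infty)$ spell out what the paper's brief proof leaves implicit.
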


\begin{proof}
Partition the integral in \eqref{eq:layer-cake} into $(0,a_1)$ and the
intervals $(a_j,a_{j+1})$.  The tail probability is constant on each open
interval.  Integrating $p t^{p-1}$ gives \eqref{eq:general-discrete}; the
same calculation over finitely many intervals gives
\eqref{eq:finite-discrete}.
\end{proof}

Applying the proposition separately to $X_+$ and $X_-$ gives the
corresponding two-sided formula whenever the positive support points of
$X$ and the magnitudes of its negative support points satisfy these ordered
support assumptions.  This avoids assigning a non-integer real power to a
negative support point.

For a lattice distribution, write
\[
  \Delta_p(k)=(k+1)^p-k^p,
  \qquad k=0,1,2,\ldots.
\]

\begin{theorem}[Integer-valued variables]\label{thm:integer}
If $X$ is integer-valued and $p>0$, then
\begin{align}
  M_p^+(X)
  &=\sum_{k=0}^\infty \Delta_p(k)\{1-F_X(k)\},
  \label{eq:integer-positive}\\
  M_p^-(X)
  &=\sum_{k=1}^\infty \Delta_p(k-1)F_X(-k).
  \label{eq:integer-negative}
\end{align}
Moreover,
\[
  A_p(X)=M_p^+(X)+M_p^-(X)
\]
as an identity in $[0,\infty]$.  If $A_p(X)<\infty$, then
\[
  S_p(X)=M_p^+(X)-M_p^-(X).
\]
In particular, for $p=1$,
\begin{equation}\label{eq:integer-mean}
  \E[X]
  =\sum_{k=0}^\infty\{1-F_X(k)\}
   -\sum_{k=1}^\infty F_X(-k),
\end{equation}
provided $\E[\abs X]<\infty$.
\end{theorem}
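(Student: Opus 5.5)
The plan is to reduce everything to the layer-cake identity of Proposition~\ref{prop:layer-cake}, applied separately to $X_+$ and $X_-$, and then to evaluate the integral piecewise over unit intervals on which the tail probability is constant. Since $X$ is integer-valued, $X_+$ and $X_-$ are nonnegative integer-valued. For $t\in[k,k+1)$ with $k\in\{0,1,2,\ldots\}$ we have $\Pp(X_+>t)=\Pp(X>t)=\Pp(X>k)=1-F_X(k)$. This holds because no integer lies in $(k,t]$, and it remains correct at $t=k$, which is in any case a null set for Lebesgue measure.

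By Theorem~\ref{thm:two-sided},
\[
  M_p^+(X)=\sum_{k=0}^\infty\int_k^{k+1}p t^{p-1}\{1-F_X(k)\}\dd t
          =\sum_{k=0}^\infty\Delta_p(k)\{1-F_X(k)\}.
\]
Splitting $[0,\infty)$ into the countable union of the intervals $[k,k+1)$ is legitimate by countable additivity (monotone convergence), because the integrand is nonnegative. The identity therefore holds in $[0,\infty]$ with no finiteness assumption. This gives \eqref{eq:integer-positive}.

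For the negative part, take $t\in[k-1,k)$ with $k\geq1$. Then $\Pp(X_->t)=\Pp(X<-t)=\Pp(X\leq -k)=F_X(-k)$. Indeed, the integers strictly less than $-t$ are exactly those at most $-k$: since $-t\in(-k,-k+1]$, the largest integer strictly below $-t$ is $-k$, including at the endpoint $t=k-1$, where $-t=-k+1$. Integrating $pt^{p-1}$ over $[k-1,k)$ gives $k^p-(k-1)^p=\Delta_p(k-1)$, which yields \eqref{eq:integer-negative}. Alternatively, one could invoke Proposition~\ref{prop:ordered-support} with support points $a_j=j$. However, that proposition is stated for an infinite positive support or a finite one with a possible atom, and it requires bookkeeping for support points carrying zero mass, so the direct computation seems cleaner.

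The remaining claims follow at once. $A_p=M_p^++M_p^-$ is the definition, valid in $[0,\infty]$. When $A_p<\infty$, both one-sided moments are finite, so $S_p=M_p^+-M_p^-$ is a well-defined difference of finite quantities. For $p=1$ we have $\Delta_1\equiv1$, which gives \eqref{eq:integer-mean}. The only delicate step is the index shift on the negative side. One must match the strict inequality $X<-t$ with the non-strict $F_X(-k)=\Pp(X\leq-k)$ (not a left limit), and this is exactly where integer-valuedness is used. I would check it carefully at the interval endpoints, even though those points have measure zero.
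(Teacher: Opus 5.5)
Your proof is correct and follows essentially the same route as the paper: apply the layer-cake identity to $X_+$ and $X_-$, then integrate $pt^{p-1}$ over unit intervals on which the tail probability is constant. The only cosmetic difference is that you index the negative side directly by $[k-1,k)$ and obtain $F_X(-k)$ at once, whereas the paper sums $\Delta_p(k)\Pp(X<-k)=\Delta_p(k)F_X(-k-1)$ over $k\geq0$ and then reindexes.
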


\begin{proof}
For an integer-valued variable, the relevant tail probabilities are constant
on every interval $(k,k+1)$, $k=0,1,\ldots$.  Hence
\begin{align*}
  M_p^+(X)
  &=\sum_{k=0}^\infty
    \int_k^{k+1}p t^{p-1}\Pp(X>k)\dd t\\
  &=\sum_{k=0}^\infty\Delta_p(k)\{1-F_X(k)\}.
\end{align*}
Similarly,
\begin{align*}
  M_p^-(X)
  &=\sum_{k=0}^\infty
    \Delta_p(k)\Pp(X<-k)\\
  &=\sum_{k=0}^\infty\Delta_p(k)F_X(-k-1).
\end{align*}
Reindexing the last series gives \eqref{eq:integer-negative}.  The remaining
claims follow from Definition~\ref{def:signed-power}.
\end{proof}

For $X\geq0$, the case $p\geq1$ of \eqref{eq:integer-positive} is the
higher-order tail-sum identity stated by \citet{Chakraborti2019}; the
layer-cake argument above also covers $0<p<1$.

\begin{corollary}[A tail criterion on the nonnegative integers]
\label{cor:lattice-criterion}
Let $Y$ be a nonnegative integer-valued random variable and let $p>0$.
Then
\[
  \E[Y^p]<\infty
  \quad\Longleftrightarrow\quad
  \sum_{k=1}^\infty k^{p-1}\Pp(Y>k)<\infty.
\]
\end{corollary}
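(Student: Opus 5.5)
The plan is to start from Theorem~\ref{thm:integer} applied to $X=Y$. Since $Y\ge0$, we have $Y_+=Y$, so \eqref{eq:integer-positive} gives the identity in $[0,\infty]$
\[
  \E[Y^p]=\sum_{k=0}^\infty \Delta_p(k)\,\Pp(Y>k).
\]
The $k=0$ term is $\Delta_p(0)\Pp(Y>0)=\Pp(Y>0)\le1$, which is finite. So $\E[Y^p]<\infty$ if and only if $\sum_{k\ge1}\Delta_p(k)\Pp(Y>k)<\infty$. The whole task then reduces to comparing the weights $\Delta_p(k)$ with $k^{p-1}$ for $k\ge1$, up to constants depending only on $p$. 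Since all terms are nonnegative, two-sided comparability of the weights gives equivalence of convergence.

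For the comparison I will use the mean value theorem: $\Delta_p(k)=p\,\xi_k^{p-1}$ for some $\xi_k\in(k,k+1)$.

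\textbf{Case $p\ge1$.} The map $t\mapsto t^{p-1}$ is nondecreasing, so
\[
  p\,k^{p-1}\le\Delta_p(k)\le p\,(k+1)^{p-1}\le p\,2^{p-1}k^{p-1}
  \quad\text{for } k\ge1.
\]

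\textbf{Case $0<p<1$.} The map $t\mapsto t^{p-1}$ is decreasing, so the inequalities reverse:
\[
  p\,2^{p-1}k^{p-1}\le p\,(k+1)^{p-1}\le\Delta_p(k)\le p\,k^{p-1}.
\]

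In both cases $c_p k^{p-1}\le\Delta_p(k)\le C_p k^{p-1}$ for all $k\ge1$, with $c_p=p\min(1,2^{p-1})$ and $C_p=p\max(1,2^{p-1})$. Comparison of nonnegative series then yields the stated equivalence.

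The only point needing care is the uniform constant bound $(k+1)/k\le2$ for $k\ge1$. This is exactly why the criterion's sum starts at $k=1$: at $k=0$ the weight $k^{p-1}$ is either $0$ or undefined. That is handled by splitting off the $k=0$ term as above. No step is a genuine obstacle; the argument is a direct corollary of Theorem~\ref{thm:integer}.
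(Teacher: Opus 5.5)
Your proof is correct and essentially matches the paper's: both start from \eqref{eq:integer-positive} with $Y_+=Y$ and compare the weights $\Delta_p(k)$ with $k^{p-1}$ via the mean value theorem. The only difference is cosmetic: you use explicit uniform bounds $c_pk^{p-1}\le\Delta_p(k)\le C_pk^{p-1}$ for $k\ge1$ and split off the $k=0$ term explicitly, whereas the paper uses the asymptotic relation $\Delta_p(k)\sim pk^{p-1}$ with the limit comparison test.
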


\begin{proof}
The mean value theorem, or a direct limit, gives
\[
  \Delta_p(k)\sim p k^{p-1}\qquad(k\to\infty).
\]
The assertion follows from \eqref{eq:integer-positive} and the limit
comparison test.  Finitely many initial terms do not affect convergence.
\end{proof}

\begin{example}[A four-point distribution]\label{ex:four-point}
Suppose
\[
\begin{array}{c|rrrr}
x&-3&-1&2&4\\ \hline
\Pp(X=x)&0.20&0.25&0.30&0.25.
\end{array}
\]
For $p=1$, the nonzero upper-tail terms are
\[
  (\Pp(X>0),\Pp(X>1),\Pp(X>2),\Pp(X>3))
  =(0.55,0.55,0.25,0.25),
\]
so $M_1^+(X)=1.60$.  The nonzero lower-tail terms are
\[
 (F_X(-1),F_X(-2),F_X(-3))=(0.45,0.20,0.20),
\]
so $M_1^-(X)=0.85$.  Consequently,
\[
  \E[X]=1.60-0.85=0.75,
  \qquad
  \E[\abs X]=1.60+0.85=2.45.
\]
The calculation therefore expresses both one-sided moments entirely through
upper- and lower-tail probabilities.
\end{example}

\begin{example}[Zeta distribution]\label{ex:zeta}
Let $Y$ have the zeta distribution
\[
  \Pp(Y=k)=\frac{k^{-\alpha}}{\zeta(\alpha)},
  \qquad k=1,2,\ldots,
  \qquad \alpha>1.
\]
Directly,
\[
  \E[Y^p]
  =\frac{\zeta(\alpha-p)}{\zeta(\alpha)}
\]
when $p<\alpha-1$, and the moment diverges when $p\geq\alpha-1$.
The same threshold follows from Corollary~\ref{cor:lattice-criterion}, since
\[
  \Pp(Y>k)
  =\frac{1}{\zeta(\alpha)}\sum_{j=k+1}^\infty j^{-\alpha}
  \sim\frac{k^{1-\alpha}}{(\alpha-1)\zeta(\alpha)}.
\]
Thus the tail series behaves like a constant multiple of
$\sum k^{p-\alpha}$, which converges exactly when $p<\alpha-1$.
\end{example}

\begin{example}[Skellam distribution]\label{ex:skellam}
Let $N_1\sim\operatorname{Poisson}(\lambda_1)$ and
$N_2\sim\operatorname{Poisson}(\lambda_2)$ be independent, and set
$X=N_1-N_2$.  This is the Skellam distribution
\citep{Skellam1946}.  Since $\E[\abs X]<\infty$,
Theorem~\ref{thm:integer} implies
\[
  \sum_{k=0}^\infty\Pp(X>k)
  -\sum_{k=1}^\infty\Pp(X\leq-k)
  =\lambda_1-\lambda_2.
\]
Table~\ref{tab:skellam} reports the two nonnegative tail sums, evaluated from
the defining Skellam probabilities and rounded to six decimal places.  Their
difference agrees with the known mean.

\begin{table}[h]
\centering
\caption{One-sided first moments for two Skellam distributions.}
\label{tab:skellam}
\begin{tabular}{ccrrr}
\toprule
$\lambda_1$&$\lambda_2$&$M_1^+(X)$&$M_1^-(X)$&$\E[X]$\\
\midrule
3&3&0.956127&0.956127&0\\
1&2&0.267591&1.267591&$-1$\\
\bottomrule
\end{tabular}
\end{table}
\end{example}

\section{Inverse moments and lower-end behavior}
\label{sec:inverse}

The finiteness of positive-order moments is governed by tail decay at
infinity, whereas the finiteness of inverse moments is governed by the
distribution near zero.  Applying the layer-cake identity to $X^{-1}$ makes
this distinction explicit.  Negative moments of positive random variables
are studied, for example, by \citet{ChaoStrawderman1972}.

\begin{theorem}[Inverse moments]\label{thm:inverse}
Let $X>0$ almost surely and let $q>0$.  Then
\begin{equation}\label{eq:inverse}
  \E[X^{-q}]
  =q\int_0^\infty t^{-q-1}F_X(t^-)\dd t
  =q\int_0^\infty t^{-q-1}F_X(t)\dd t,
\end{equation}
as an identity in $[0,\infty]$.
\end{theorem}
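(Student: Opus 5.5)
The plan is to apply Proposition~\ref{prop:layer-cake} to $Y=X^{-1}$, which is well defined and strictly positive almost surely because $X>0$ almost surely, and then change variables. Taking $p=q$, the layer-cake identity gives
\[
  \E[X^{-q}]=\E[Y^q]=q\int_0^\infty s^{q-1}\Pp(X^{-1}>s)\dd s
\]
as an identity in $[0,\infty]$. For $s>0$ and $X>0$, the event $\{X^{-1}>s\}$ coincides with $\{X<1/s\}$. Hence $\Pp(X^{-1}>s)=F_X((1/s)^-)$.

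Next I will substitute $t=1/s$, so that $\dd s=t^{-2}\dd t$ in absolute value and $s^{q-1}=t^{1-q}$. This is a smooth decreasing bijection of $(0,\infty)$ onto itself. The change of variables is legitimate for nonnegative measurable integrands even when the integral is infinite, by monotone convergence or the standard substitution rule for nonnegative functions. It yields
\[
  q\int_0^\infty t^{1-q}\,t^{-2}F_X(t^-)\dd t=q\int_0^\infty t^{-q-1}F_X(t^-)\dd t,
\]
which is the first equality in \eqref{eq:inverse}.

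For the second equality, I will observe that $F_X(t^-)$ and $F_X(t)$ differ only at atoms of $X$. The function $F_X$ is monotone, so it has at most countably many discontinuities, and this set has Lebesgue measure zero. The two nonnegative integrands therefore agree almost everywhere, and their integrals agree in $[0,\infty]$.

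The only point needing care is the handling of strict versus non-strict inequalities. The strict inequality $X^{-1}>s$ produces the left limit $F_X(t^-)$, and this discrepancy is exactly what the almost-everywhere argument in the previous paragraph removes. I do not expect any further obstacle, since Tonelli's theorem is already built into Proposition~\ref{prop:layer-cake}.
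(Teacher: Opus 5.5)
Your proposal is correct and follows the paper's proof essentially step for step. You apply the layer-cake identity to $X^{-1}$ with power $q$, identify $\Pp(X^{-1}>s)=F_X((s^{-1})^-)$, substitute $t=s^{-1}$, and use the countability of the jumps of $F_X$ to replace $F_X(t^-)$ by $F_X(t)$, with your added justification of the substitution for nonnegative integrands with possibly infinite integrals being a harmless refinement the paper leaves implicit.
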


\begin{proof}
Proposition~\ref{prop:layer-cake}, applied to $X^{-1}$ with power $q$,
gives
\[
  \E[X^{-q}]
  =q\int_0^\infty s^{q-1}\Pp(X<s^{-1})\dd s
  =q\int_0^\infty s^{q-1}F_X((s^{-1})^-)\dd s.
\]
The substitution $t=s^{-1}$ yields the first integral in
\eqref{eq:inverse}.  A distribution function has at most countably many
jumps.  Hence $F_X(t^-)$ and $F_X(t)$ differ on a Lebesgue-null set, which
gives the second integral.
\end{proof}

The part of \eqref{eq:inverse} over $[1,\infty)$ is always finite because
$F_X(t)\leq1$ and $\int_1^\infty t^{-q-1}\dd t=1/q$.  Existence of the
inverse moment is therefore determined entirely by the behavior of $F_X$
near zero.

\begin{corollary}[Power behavior at zero]\label{cor:near-zero}
Let $X>0$ almost surely and let $q>0$.  Suppose that, as $t\downarrow0$,
\[
  F_X(t)\sim c t^\alpha
\]
for constants $c>0$ and $\alpha>0$.  Then
\[
  \E[X^{-q}]<\infty
  \quad\Longleftrightarrow\quad q<\alpha.
\]
\end{corollary}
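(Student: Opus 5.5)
The plan is to start from the second form of Theorem~\ref{thm:inverse},
\[
  \E[X^{-q}]=q\int_0^\infty t^{-q-1}F_X(t)\dd t ,
\]
an identity in $[0,\infty]$, and split the integral at some $\delta\in(0,1]$ to be chosen below. By the remark after Theorem~\ref{thm:inverse}, the part over $[1,\infty)$ is at most $1$. The part over $[\delta,1]$ is also finite, since $F_X\leq1$ and $t^{-q-1}\leq\delta^{-q-1}$ there. Hence $\E[X^{-q}]<\infty$ if and only if $\int_0^\delta t^{-q-1}F_X(t)\dd t<\infty$.

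Next, I will use the asymptotic hypothesis $F_X(t)\sim ct^\alpha$ to choose $\delta>0$ small enough that
\[
  \tfrac{c}{2}t^\alpha\leq F_X(t)\leq 2c\,t^\alpha\qquad(0<t<\delta).
\]
The integrand on $(0,\delta)$ is then squeezed between constant multiples of $t^{\alpha-q-1}$. Since $\int_0^\delta t^{\alpha-q-1}\dd t$ is finite exactly when $\alpha-q-1>-1$, that is, when $q<\alpha$, the upper bound gives finiteness for $q<\alpha$. For $q\geq\alpha$, the lower bound gives divergence. This is just a limit comparison test for nonnegative integrands, applied near the origin.

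I do not expect a real obstacle. The one point needing care is that the argument stays valid when the moment is infinite. This is guaranteed because Theorem~\ref{thm:inverse} holds in $[0,\infty]$ and every integrand involved is nonnegative, so the comparison inequalities hold for possibly infinite integrals without any prior integrability assumption. In the boundary case $q=\alpha$, the lower bound gives $\int_0^\delta t^{-1}\dd t=\infty$, so divergence is included.
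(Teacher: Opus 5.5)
Your proposal is correct and follows essentially the same route as the paper. Both start from the representation in Theorem~\ref{thm:inverse}, discard the part over $[1,\infty)$ as finite, and compare the integrand near zero with $c\,t^{\alpha-q-1}$; you simply spell out the limit-comparison step with explicit bounds on $(0,\delta)$ and a bounded middle piece $[\delta,1]$.
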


\begin{proof}
Near zero, the integrand in \eqref{eq:inverse} is asymptotic to
$c t^{\alpha-q-1}$.  Its integral over $(0,1)$ is finite exactly when
$\alpha-q>0$.
\end{proof}

\begin{example}[Beta distribution]\label{ex:beta}
Let $X\sim\operatorname{Beta}(a,b)$ with $a,b>0$.  Since
\[
  F_X(t)\sim\frac{t^a}{aB(a,b)}
  \qquad(t\downarrow0),
\]
Corollary~\ref{cor:near-zero} shows that $\E[X^{-q}]$ is finite exactly
when $q<a$.  In that case, direct beta integration gives
\[
  \E[X^{-q}]
  =\frac{B(a-q,b)}{B(a,b)}.
\]
The tail criterion agrees with the condition obtained by direct beta
integration and makes explicit the role of the distribution near its lower
endpoint.
\end{example}

\section{Logarithmic moments}
\label{sec:logarithmic}

Let $X>0$ almost surely.  Applying the two-sided mean formula of
\citet[Proposition~2.1]{Lo2019} to $\log X$ and changing variables gives a
distribution-function representation.  We include the direct proof to
display the strict-inequality convention at atoms explicitly.

\begin{corollary}[Distribution-function representation]\label{cor:log-cdf}
Let $X>0$ almost surely.  The positive and negative parts of $\log X$ satisfy
\begin{align}
  \E[(\log X)_+]
  &=\int_1^\infty\frac{1-F_X(t)}{t}\dd t,
  \label{eq:log-positive}\\
  \E[(\log X)_-]
  &=\int_0^1\frac{F_X(t^-)}{t}\dd t,
  \label{eq:log-negative}
\end{align}
as identities in $[0,\infty]$.  Consequently,
$\E[\abs{\log X}]<\infty$ if and only if both integrals are finite; in that
case,
\begin{equation}\label{eq:log-cdf}
  \E[\log X]
  =\int_1^\infty\frac{1-F_X(t)}{t}\dd t
   -\int_0^1\frac{F_X(t^-)}{t}\dd t.
\end{equation}
The left limit in the second integral may be replaced by $F_X(t)$.
\end{corollary}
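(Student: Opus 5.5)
The plan is to apply Proposition~\ref{prop:layer-cake} with $p=1$ separately to the nonnegative variables $Y_1=(\log X)_+$ and $Y_2=(\log X)_-$, and then change variables $s\mapsto t$ in each integral. Since $X>0$ almost surely, $\log X$ is an almost surely real-valued random variable, and Theorem~\ref{thm:two-sided} with $p=1$ gives
\[
  \E[(\log X)_+]=\int_0^\infty \Pp(\log X>s)\dd s,
  \qquad
  \E[(\log X)_-]=\int_0^\infty \Pp(\log X<-s)\dd s,
\]
both as identities in $[0,\infty]$. No integrability is needed at this stage, because Tonelli's theorem is already built into the layer-cake identity.

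Next I will translate the events through the strictly increasing map $\log$. For $s\geq0$, the event $\{\log X>s\}$ equals $\{X>e^s\}$, so its probability is $1-F_X(e^s)$. Likewise $\{\log X<-s\}=\{X<e^{-s}\}$, with probability $F_X((e^{-s})^-)$. Strictness of the inequality is preserved exactly because $\log$ is strictly monotone, and this is where the atom convention matters.

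The substitution $t=e^s$, with $\dd s=\dd t/t$, maps $(0,\infty)$ onto $(1,\infty)$ and yields \eqref{eq:log-positive}. The substitution $t=e^{-s}$, again with $\dd s=\dd t/t$ after reversing orientation, maps $(0,\infty)$ onto $(0,1)$ and yields \eqref{eq:log-negative}. These are changes of variables for nonnegative measurable integrands under a $C^1$ diffeomorphism, so they are valid in $[0,\infty]$.

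The consequences then follow directly. Since $\abs{\log X}=(\log X)_++(\log X)_-$, the absolute log moment is finite exactly when both integrals are finite. In that case $\E[\log X]$ is the difference of the two finite quantities, which gives \eqref{eq:log-cdf}.

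For the final sentence, I will argue as in the proof of Theorem~\ref{thm:inverse}. $F_X$ has at most countably many discontinuities, so $F_X(t^-)=F_X(t)$ for Lebesgue-almost every $t$. The two integrands therefore agree almost everywhere on $(0,1)$, and the integrals coincide in $[0,\infty]$.

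The only delicate point is the bookkeeping at atoms: the strict inequality $X<e^{-s}$ must be kept before substituting, so that the left limit appears. Everything else is routine.
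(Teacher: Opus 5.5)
Your proof is correct and is essentially the paper's argument. The paper writes the already-substituted pointwise identity $\log x=\int_1^\infty t^{-1}\ind_{\{x>t\}}\dd t-\int_0^1 t^{-1}\ind_{\{x<t\}}\dd t$ and applies Tonelli directly, whereas you invoke Theorem~\ref{thm:two-sided} for $\log X$ and then substitute $t=e^{\pm s}$; this is exactly the alternative route via the two-sided mean formula and a change of variables that the paper mentions just before the corollary, and your treatment of the strict event $\{X<e^{-s}\}$ and of replacing $F_X(t^-)$ by $F_X(t)$ matches the paper's.
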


\begin{proof}
For every $x>0$,
\begin{equation}\label{eq:pointwise-log}
  \log x
  =\int_1^\infty\frac{\ind_{\{x>t\}}}{t}\dd t
   -\int_0^1\frac{\ind_{\{x<t\}}}{t}\dd t.
\end{equation}
The two integrals are $(\log x)_+$ and $(\log x)_-$, respectively.
Tonelli's theorem applied separately to these nonnegative quantities gives
\eqref{eq:log-positive} and \eqref{eq:log-negative}, including when one is
infinite.  Their sum is $\E[\abs{\log X}]$ and, when both are finite, their
difference is $\E[\log X]$.  As in Theorem~\ref{thm:inverse}, replacing a
left limit by the distribution function changes the integrand only on a
Lebesgue-null set.
\end{proof}

The next representation connects the logarithmic moment with the Laplace
transform
\[
  \mathcal L_X(s)=\E[\exp(-sX)],\qquad s\geq0.
\]

\begin{theorem}[Frullani--Laplace representation]\label{thm:log-laplace}
If $X>0$ almost surely and $\E[\abs{\log X}]<\infty$, then
\begin{equation}\label{eq:log-laplace}
  \E[\log X]
  =\int_0^\infty
    \frac{\exp(-s)-\mathcal L_X(s)}{s}\dd s.
\end{equation}
\end{theorem}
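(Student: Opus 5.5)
The plan is to start from the classical Frullani identity: for every $x>0$,
\[
  \log x=\int_0^\infty\frac{\exp(-s)-\exp(-sx)}{s}\dd s,
\]
and then take expectations and interchange $\E$ with $\int\dd s$. The interchange is the real issue. The integrand $\{\exp(-s)-\exp(-sX)\}/s$ changes sign according to whether $X>1$ or $X<1$, so Tonelli does not apply directly. I will therefore split according to the positive and negative parts of $\log X$, as in Corollary~\ref{cor:log-cdf}, and apply Tonelli to each nonnegative piece separately.

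\textbf{The pointwise identity.} For $x>0$, write
\[
  \exp(-s)-\exp(-sx)=\int_1^x s\exp(-su)\dd u.
\]
Then
\[
  \int_0^\infty\frac{\exp(-s)-\exp(-sx)}{s}\dd s
  =\int_1^x\int_0^\infty\exp(-su)\dd s\dd u
  =\int_1^x\frac{\dd u}{u}
  =\log x.
\]
The interchange is justified by Tonelli, because the integrand $\exp(-su)$ has constant sign for $u$ between $1$ and $x$. This same computation shows that, for fixed $x$, the function $g_x(s)=\{\exp(-s)-\exp(-sx)\}/s$ has constant sign in $s$: it is $\geq0$ if $x\geq1$ and $\leq0$ if $x\leq1$. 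Moreover,
\[
  \int_0^\infty\abs{g_x(s)}\dd s=\abs{\log x}.
\]

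\textbf{Fubini.} Applying Tonelli to $\abs{g_X(s)}$ on $\Omega\times(0,\infty)$ gives
\[
  \E\int_0^\infty\abs{g_X(s)}\dd s=\E[\abs{\log X}]<\infty.
\]
So $g_X(s)$ is jointly integrable, and Fubini permits the interchange:
\[
  \E[\log X]=\int_0^\infty\E[g_X(s)]\dd s
  =\int_0^\infty\frac{\exp(-s)-\mathcal L_X(s)}{s}\dd s.
\]
Equivalently, one can split $g_X=g_X\ind_{\{X\geq1\}}+g_X\ind_{\{X<1\}}$, giving two constant-sign pieces whose Tonelli integrals are $\E[(\log X)_+]$ and $-\E[(\log X)_-]$.

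\textbf{Expected main obstacle.} The only delicate point is the observation that $g_x$ does not change sign in $s$. Without it, $\int_0^\infty\abs{g_x(s)}\dd s$ could exceed $\abs{\log x}$, and the integrability hypothesis $\E[\abs{\log X}]<\infty$ would not control the joint integral. That observation follows from the representation $\int_1^x s\exp(-su)\dd u$, which is why I establish the pointwise identity in that form first.
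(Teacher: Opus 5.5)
Your proof is correct and follows the paper's argument: Frullani's identity pointwise, the observation that $\{\exp(-s)-\exp(-sx)\}/s$ has constant sign in $s$ so that its absolute integral equals $\abs{\log x}$, and then Fubini justified by $\E[\abs{\log X}]<\infty$. The only difference is that you derive Frullani's identity and the sign property yourself from $\exp(-s)-\exp(-sx)=\int_1^x s\exp(-su)\dd u$, whereas the paper cites the identity and simply asserts the sign property.
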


\begin{proof}
Frullani's identity \citep{Iyengar1941} gives, for every $x>0$,
\[
  \log x
  =\int_0^\infty\frac{\exp(-s)-\exp(-sx)}{s}\dd s.
\]
For fixed $x$, the numerator has one sign for all $s>0$, and therefore
\[
  \int_0^\infty
  \frac{\abs{\exp(-s)-\exp(-sx)}}{s}\dd s
  =\abs{\log x}.
\]
The assumption $\E[\abs{\log X}]<\infty$ now justifies Fubini's theorem.
Taking expectations inside the integral yields \eqref{eq:log-laplace}.
\end{proof}

The condition in the next corollary is the standard Laplace-transform order
$X\leq_{\mathrm{Lt}}Y$.  The logarithmic comparison is also a classical
consequence of the characterization of this order by functions with
completely monotone derivatives
\citep[Theorem~5.A.4]{ShakedShanthikumar2007}; here it follows directly from
Theorem~\ref{thm:log-laplace}.

\Needspace{8\baselineskip}
\begin{corollary}[Laplace-transform order]\label{cor:laplace-order}
Let $X$ and $Y$ be strictly positive random variables such that
$\E[\abs{\log X}]+\E[\abs{\log Y}]<\infty$.  If
\[
  \mathcal L_X(s)\geq\mathcal L_Y(s)
  \qquad\text{for every }s>0,
\]
then $\E[\log X]\leq\E[\log Y]$.
\end{corollary}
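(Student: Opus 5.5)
The plan is to apply Theorem~\ref{thm:log-laplace} to $X$ and to $Y$ separately and then subtract the two representations. The integrability hypothesis $\E[\abs{\log X}]+\E[\abs{\log Y}]<\infty$ gives $\E[\abs{\log X}]<\infty$ and $\E[\abs{\log Y}]<\infty$. So \eqref{eq:log-laplace} holds for both variables, and both
\[
  \E[\log X]=\int_0^\infty\frac{\exp(-s)-\mathcal L_X(s)}{s}\dd s,
  \qquad
  \E[\log Y]=\int_0^\infty\frac{\exp(-s)-\mathcal L_Y(s)}{s}\dd s
\]
are finite real numbers.

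The key step is to check that these integrals are Lebesgue integrals of absolutely integrable functions. The proof of Theorem~\ref{thm:log-laplace} gives this: the bound
\[
  \int_0^\infty\frac{\abs{\exp(-s)-\mathcal L_X(s)}}{s}\dd s
  \leq\E\Big[\int_0^\infty\frac{\abs{\exp(-s)-\exp(-sX)}}{s}\dd s\Big]
  =\E[\abs{\log X}]<\infty
\]
follows from Jensen's (triangle) inequality for the expectation together with Tonelli's theorem, and likewise for $Y$. With both integrals absolutely convergent, linearity of the integral applies, so subtracting yields
\[
  \E[\log Y]-\E[\log X]
  =\int_0^\infty\frac{\mathcal L_X(s)-\mathcal L_Y(s)}{s}\dd s.
\]

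By hypothesis the integrand is nonnegative for every $s>0$, so the integral is $\geq0$, which gives $\E[\log X]\leq\E[\log Y]$. The only point needing care is the justification of the subtraction. One must not split $\int(\exp(-s)-\mathcal L_X(s))/s\dd s$ into $\int \exp(-s)/s\dd s$ and $\int\mathcal L_X(s)/s\dd s$, since each of those diverges at $0$. The combined integrands must be kept together, and that is exactly where the absolute-integrability bound above is used. The rest is immediate.
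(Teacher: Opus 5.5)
Your proposal is correct and matches the paper's proof: subtract the two absolutely convergent representations from Theorem~\ref{thm:log-laplace}, then use the pointwise order $\mathcal L_X\geq\mathcal L_Y$ to fix the sign of the combined integrand. Your check of absolute integrability, via the triangle inequality and Tonelli's theorem, just spells out what the paper's one-line proof takes from the Fubini justification in Theorem~\ref{thm:log-laplace}.
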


\begin{proof}
Subtract the two absolutely convergent representations in
\eqref{eq:log-laplace}.  The assumed pointwise order makes the resulting
integrand nonpositive.
\end{proof}

\begin{example}[Gamma distributions]\label{ex:gamma-order}
Let $X\sim\operatorname{Gamma}(1,1)$ and
$Y\sim\operatorname{Gamma}(2,2)$, using the shape--rate parameterization.
Both variables have mean one, and
\[
  \mathcal L_X(s)=\frac{1}{1+s},
  \qquad
  \mathcal L_Y(s)=\left(\frac{2}{2+s}\right)^2.
\]
Since
\[
  \mathcal L_X(s)-\mathcal L_Y(s)
  =\frac{s^2}{(1+s)(2+s)^2}\geq0,
\]
Corollary~\ref{cor:laplace-order} gives
$\E[\log X]\leq\E[\log Y]$.  Directly,
\[
  \E[\log X]=-\gamma,
  \qquad
  \E[\log Y]=1-\gamma-\log2,
\]
where $\gamma$ is Euler's constant, confirming the strict inequality.
\end{example}

\begin{remark}[Differentiation at order zero]\label{rem:derivative-zero}
If there is an $\varepsilon>0$ such that
$\E[X^\varepsilon+X^{-\varepsilon}]<\infty$, then for
$\abs r\leq\varepsilon/2$ the mean value theorem and a standard exponential
bound give
\[
  \left\lvert\frac{x^r-1}{r}\right\rvert
  \leq C_\varepsilon\{x^\varepsilon+x^{-\varepsilon}\},
  \qquad x>0,
\]
with the quotient interpreted as $\log x$ at $r=0$.  Dominated convergence
therefore yields
\[
  \left.\frac{\mathrm d}{\mathrm dr}\E[X^r]\right|_{r=0}
  =\E[\log X].
\]
Thus, differentiation at zero is valid under a neighborhood moment
condition.  By contrast, Corollary~\ref{cor:log-cdf} and
Theorem~\ref{thm:log-laplace} require only
$\E[\abs{\log X}]<\infty$.
\end{remark}

\begin{example}[A reciprocal two-point distribution]\label{ex:log-two-point}
Let $\Pp(X=1/2)=\Pp(X=2)=1/2$.  Then
\[
  \int_1^\infty\frac{1-F_X(t)}{t}\dd t
  =\frac12\int_1^2\frac{1}{t}\dd t
  =\frac12\log2,
\]
and
\[
  \int_0^1\frac{F_X(t^-)}{t}\dd t
  =\frac12\int_{1/2}^1\frac{1}{t}\dd t
  =\frac12\log2.
\]
Equation \eqref{eq:log-cdf} therefore gives $\E[\log X]=0$, in agreement
with the direct calculation
$\{\log(1/2)+\log2\}/2=0$.
\end{example}

\section{Summary of domains and finiteness conditions}
\label{sec:summary}

Table~\ref{tab:summary} collects the conditions used in the preceding
sections.  The finiteness of the first two functionals is governed by
behavior in both tails at infinity, that of inverse moments by behavior near
zero, and that of the logarithmic moment by behavior near zero and at
infinity.

\begin{table}[ht]
\centering
\caption{Domains and finiteness conditions for the moment functionals.}
\label{tab:summary}
\small
\begin{tabular}{@{}>{\raggedright\arraybackslash}p{0.18\linewidth}
                  >{\raggedright\arraybackslash}p{0.17\linewidth}
                  >{\raggedright\arraybackslash}p{0.39\linewidth}
                  >{\raggedright\arraybackslash}p{0.17\linewidth}@{}}
\toprule
Functional & Domain & Finiteness condition & Governing region \\
\midrule
$A_p(X)=\E[\abs X^p]$
  & Real-valued $X$; $p>0$
  & Both one-sided integrals in Corollary~\ref{cor:existence} are finite
  & Both tails at infinity \\
$S_p(X)=\E[\spow{X}{p}]$
  & Real-valued $X$; $p>0$
  & $A_p(X)<\infty$ under the convention adopted here
  & Both tails at infinity \\
$\E[X^{-q}]$
  & $X>0$ a.s.; $q>0$
  & $\int_0^1 t^{-q-1}F_X(t)\dd t<\infty$
  & Near zero \\
$\E[\log X]$
  & $X>0$ a.s.
  & Both integrals in \eqref{eq:log-positive}--\eqref{eq:log-negative}
    are finite
  & Near zero and infinity \\
\bottomrule
\end{tabular}
\end{table}

All distribution-function identities are Lebesgue integrals.  Consequently,
$F_X(t)$ and $F_X(t-)$ are interchangeable inside these integrals, although
they differ pointwise at atoms.  Ordinary raw moments are recovered at
integer orders as described in Remark~\ref{rem:ordinary-powers}.

\section{Concluding remarks}

This note organizes tail representations for several moment functionals
through a common set of arguments.  For a real-valued random variable,
separating the positive and negative parts yields absolute and signed-power
moment formulas for every $p>0$, with the treatment of atoms made explicit.
On ordered discrete supports and integer lattices, the same representations
reduce to exact weighted sums of tail probabilities.

For strictly positive random variables, applying the layer-cake identity to
$X^{-1}$ shows that inverse-moment existence is determined by behavior near
zero.  The logarithmic moment admits complementary distribution-function and
Frullani--Laplace representations under the condition
$\E[\abs{\log X}]<\infty$; the Frullani--Laplace representation also gives a
direct comparison under Laplace-transform order.  Taken together, these
results clarify the domains and finiteness conditions associated with
ordinary, absolute, signed-power, inverse, and logarithmic moments across
continuous, discrete, singular, and mixed distributions.

\paragraph{Funding.}
This research was supported in part by the Conselho Nacional de
Desenvolvimento Cient\'ifico e Tecnol\'ogico (CNPq) and the Coordena\c{c}\~ao
de Aperfei\c{c}oamento de Pessoal de N\'ivel Superior (CAPES).

\paragraph{Competing interests.}
The authors declare that they have no competing interests.

\end{document}